\newcommand{\setword}[2]{%
  \phantomsection
  #1\def\@currentlabel{\unexpanded{#1}}\label{#2}%
}
\newcommand{\be}{\begin{equation}}
\newcommand{\ee}{\end{equation}}
\newcommand{\ba}{\begin{eqnarray}}
\newcommand{\ea}{\end{eqnarray}}
\newtheorem{theorem}{Theorem}
\newtheorem{proposition}{Proposition}
\def\MarkLt{4pt}
\def\MarkSep{2pt}
\tikzset{
  TwoMarks/.style={
    postaction={decorate,
      decoration={
        markings,
        mark=at position #1 with
          {
              \begin{scope}[xslant=0.2]
              \draw[line width=\MarkSep,white,-] (0pt,-\MarkLt) -- (0pt,\MarkLt) ;
              \draw[-] (-0.5*\MarkSep,-\MarkLt) -- (-0.5*\MarkSep,\MarkLt) ;
              \draw[-] (0.5*\MarkSep,-\MarkLt) -- (0.5*\MarkSep,\MarkLt) ;
              \end{scope}
          }
       }
    }
  },
  TwoMarks/.default={0.5},
}
\def\>{\rangle}
\def\<{\langle}
\begin{document}  

\title{Scalable \& Noise-Robust Communication Advantage of Multipartite Quantum Entanglement}

\author{Ananya Chakraborty}
\email{ananyaphys.c@gmail.com}
\affiliation{Department of Physics of Complex Systems, S. N. Bose National Center for Basic Sciences, Block JD, Sector III, Salt Lake, Kolkata 700106, India.}

\author{Ram Krishna Patra}
\affiliation{Department of Physics of Complex Systems, S. N. Bose National Center for Basic Sciences, Block JD, Sector III, Salt Lake, Kolkata 700106, India.}

\author{Kunika Agarwal}
\affiliation{Department of Physics of Complex Systems, S. N. Bose National Center for Basic Sciences, Block JD, Sector III, Salt Lake, Kolkata 700106, India.}

\author{Samrat Sen}
\affiliation{Department of Physics of Complex Systems, S. N. Bose National Center for Basic Sciences, Block JD, Sector III, Salt Lake, Kolkata 700106, India.}

\author{Pratik Ghosal}
\affiliation{Department of Physics of Complex Systems, S. N. Bose National Center for Basic Sciences, Block JD, Sector III, Salt Lake, Kolkata 700106, India.}

\author{Sahil Gopalkrishna Naik}
\affiliation{Department of Physics of Complex Systems, S. N. Bose National Center for Basic Sciences, Block JD, Sector III, Salt Lake, Kolkata 700106, India.}

\author{Manik Banik}
\affiliation{Department of Physics of Complex Systems, S. N. Bose National Center for Basic Sciences, Block JD, Sector III, Salt Lake, Kolkata 700106, India.}

\begin{abstract}
Distributed computing, involving multiple servers collaborating on designated computations, faces a critical challenge in optimizing inter-server communication—an issue central to the study of communication complexity. Quantum resources offer advantages over classical methods in addressing this challenge. In this work, we investigate a distributed computing scenario with multiple senders and a single receiver, establishing a scalable advantage of multipartite quantum entanglement in mitigating communication complexity. Specifically, we demonstrate that when the receiver and the senders share a multi-qubit Greenberger-Horne-Zeilinger (GHZ) state—a quintessential form of genuine multipartite entanglement—certain global functions of the distributed inputs can be computed with only one bit of classical communication from each sender. In contrast, without entanglement, two bits of communication are required from all but one sender. Consequently, quantum entanglement reduces communication overhead by \(n-1\) bits for \(n\) senders, allowing for arbitrary scaling with an increasing number of senders. We also show that the entanglement-based protocol exhibits significant robustness under white noise, thereby establishing the potential for experimental realization of this novel quantum advantage.
\end{abstract}


\maketitle	
\section{Introduction}
Processing information with quantum systems offers significant advantages over traditional classical methods \cite{Dowling2003,Nielsen2012,Deutsch2020,Aspect2023}. In particular, quantum entanglement -- one of the most intriguing nonclassical features of multipartite quantum systems \cite{Einstein1935,Bohr1935,Schrodinger1935,Schrodinger1936,Bell1964} -- has profound implications for communication \cite{Wilde2013} and distributed computing \cite{Linden2007}. For instance, Holevo's theorem limits the classical capacity of a perfect quantum channel to that of its classical counterpart \cite{Holevo1973} (see also \cite{Frenkel2015,DallArno2017,Naik2022,Patra2023}). However, as demonstrated by the seminal superdense coding protocol \cite{Bennett1992} and subsequent studies \cite{Adami1997,Bennett1999,Bennett2002}, pre-existing entanglement between sender and receiver can enhance the classical capacity of quantum channels.

Pre-shared quantum entanglement can significantly reduce the need for classical communication in communication complexity problems \cite{Ambainis1996,Cleve1997,Buhrman1998,Cleve1999,Buhrman1999,Buhrman2001,Wolf2001}. In these problems, classical inputs are distributed across distant servers (or parties), and a global function of these inputs must be computed at a designated server. Nonlocal correlations \cite{Bell1966,Mermin1993,Brunner2014} obtained from entangled quantum states play crucial role in minimizing classical communication requirements during function computation \cite{Buhrman2010}. Notably, stronger than quantum nonlocal correlations can trivialize communication complexity problems, underscoring a clear distinction between quantum and post-quantum correlations \cite{vanDam2012,Brassard2006,Naik2023,Botteron2024,Sidhardh2024,Ghosh2024}.

In this work, we investigate how far the advantages of quantum entanglement can be extended in communication complexity tasks. Specifically, we examine the utility of multipartite quantum entanglement, with a particular focus on genuine multipartite entanglement \cite{Dur2000, Verstraete2002, Puliyil2022, Joshi2024}. We consider a communication complexity task, $\mathrm{CC}_n$, involving $n$ distant senders, denoted as $\{\text{Alice-}i\}_{i=1}^n$, and a computing server, Bob. The senders and the server are each provided with two-bit strings that satisfy a promise condition, and Bob's task is to evaluate a specific global function of these input strings. We demonstrate that, when the senders and Bob share an \((n+1)\)-qubit Greenberger–Horne–Zeilinger (GHZ) state, Bob can perfectly compute the function with only one bit of classical communication from each sender. In contrast, without entanglement, the $\mathrm{CC}_2$ problem requires two bits of communication from one sender and one bit from the other. Thus, preshared entanglement provides a one-bit advantage in the $\mathrm{CC}_2$ task. For the general $\mathrm{CC}_n$ task, without entanglement, two bits of communication are required from all but one sender. Therefore, preshared entanglement offers a reduction of $(n-1)$-bit communication overhead, which can be made arbitrarily large as the number of parties increases. Additionally, we analyze a noise-robust version of the task to assess the effectiveness of a noisy GHZ state in communication complexity. As it turns out GHZ state under white noise, exhibits the advantage for quite a large noise parameter.

\section{PRELIMINARIES}   
{\it Communication complexity.--} Classical communication complexity addresses the question of minimizing the amount of classical communication required to compute a function with distributed inputs  \cite{Yao1979, Kushilevitz1996, Kushilevitz1997}. In its simplest form, two servers, commonly referred to as Alice and Bob, are provided randomly chosen bit strings, ${\bf x}\in\{0,1\}^{\times m}$ and ${\bf y}\in\{0,1\}^{\times m^\prime}$, respectively. Bob is tasked with computing a function $f:{\bf x}\times{\bf y}\mapsto\{0,1\}$ while minimizing the classical communication required from Alice. To achieve this goal  Alice and Bob are typically allowed to use pre-shared classical correlations, also called shared randomness\cite{Bavarian2014, Canonne2015}. Quantum communication complexity extends this framework by utilizing quantum resources. In the model introduced by Yao \cite{Yao1993}, Alice sends qubits to Bob, while in the entanglement-based model proposed by Cleve and Buhrman \cite{Cleve1997}, the parties share entanglement and communicate using classical bits. The quantum advantage in the entanglement-based model utilizes the nonlocal feature of correlations obtained from the quantum entangled states \cite{Buhrman2010}. In this work, we aim to explore the nontrivial advantages of genuine multipartite quantum entanglement in communication complexity by considering a multipartite variant of the task. Before presenting our main results, we briefly recall the notion of multipartite quantum entanglement.

{\it Multipartite entanglement.--} State of a quantum system \( S \) associated with a complex Hilbert space \( \mathcal{H}_S \) is described by a density operator \(\rho_S\), a trace-one positive semidefinite matrix, within the set \(\mathcal{D}(\mathcal{H}_S)\) of all such operators. For finite dimensions, \(\mathcal{H}_S\) is isomorphic to a complex Euclidean space \(\mathbb{C}^{d_S}\), where \( d_S \) represents the dimension of the Hilbert space. States with \(\mathrm{Tr}[\rho_S^2] = 1\) are called pure states, corresponding to normalized ray vectors \(\ket{\psi}_S \in \mathbb{C}^{d_S}\). Multipartite quantum systems are associated with the tensor product of their subsystems' Hilbert spaces. A state \(\ket{\psi}_{A_1 \cdots A_K} \in \otimes_{j=1}^K \mathbb{C}^{d_j}\) is fully product if it can be expressed as \(\ket{\psi}_{A_1 \cdots A_K} = \otimes_{j=1}^K \ket{\phi}_{A_j}\), where \(\ket{\phi}_{A_j} \in \mathbb{C}^{d_j}\); otherwise, \(\ket{\psi}_{A_1 \cdots A_K}\) contains some form of multipartite entanglement \cite{Horodecki2009, Guhne2009}. For \(K > 2\), various forms of entanglement are possible, with genuine entanglement being the most intricate one \cite{Dur2000, Verstraete2002, Puliyil2022, Joshi2024}. A state \(\ket{\psi}_{A_1 \cdots A_K}\) is called genuinely entangled if it cannot be decomposed into a product form across any bipartition, i.e., \(\ket{\psi}_{A_1 \cdots A_K} \neq \ket{\phi}_{P} \otimes \ket{\chi}_{P^C}\) for all non-empty subsets \(P \subsetneq \{A_1, \cdots, A_K\}\) where \(P^C := \{A_1, \cdots, A_K\} \setminus P\). A canonical example of a genuinely entangled \(K\)-qubit state, that will be used in this work, is the GHZ state \(\ket{G_K} := (\ket{0}^{\otimes K} + \ket{1}^{\otimes K})/\sqrt{2} \in (\mathbb{C}^2)^{\otimes K}\), originally introduced while studying multipartite variant of Bell nonlocality \cite{Greenberger1989} (see also \cite{Mermin1990, Greenberger1990, Pan2000}).

\section{RESULTS}
We start by defining the communication complexity task $\mathrm{CC}_n$, that involves $n$ distant senders $\{\mbox{Alice-}i\}_{i=1}^n$ and a computing server Bob. Each party receives bit strings ${\bf x}_1=x^0_1x^1_1$, ${\bf x}_2=x^0_2x^1_2$, ..., ${\bf x}_{n}=x^0_{n}x^1_{n}$, and ${\bf y}=y^0y^1$, respectively. The bits $x^1_1,\cdots,x^1_{n}$, and $y^1$ are sampled independently and uniformly at random, while the strings otherwise satisfy a constraint known to be the promise condition, that is $(\sum_{i=1}^{n}x^0_i+y^0)$ must be an even number. Bob's goal is to compute the function $f_n:{\bf y}\times_{i=1}^{n}{\bf x}_i\mapsto\{0,1\}$, defined as $f_n({\bf x}_1,\cdots,{\bf x}_{n},{\bf y}):=\oplus_{i=1}^{n}x^1_i\oplus y^1\oplus \mathbb{P}[(\sum_{i=1}^{n}x^0_i+y^0)/2]$, where `$\oplus$' denotes the binary XOR operation, and $\mathbb{P}:\mathbb{N}\mapsto\{0,1\}$ is defined as $\mathbb{P}(r)=1$ when $r$ is odd and $\mathbb{P}(r)=0$ otherwise. The collaborative goal of the parties is to compute the function $f_n$, with minimum possible collaboration. For this we now discuss an efficient protocol utilizing multi-qubit GHZ state apriori shared among the parties.
\begin{theorem}\label{theo1}
The function $f_n$ can be computed exactly by Bob with $1$-bit of classical communication from each of the Alices when the GHZ state $\ket{G_{n+1}}$ is shared among them.    
\end{theorem}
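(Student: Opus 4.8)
The plan is to exhibit an explicit protocol in which the only communication is one classical bit per sender, relying on Mermin--GHZ type correlations of the shared state $\ket{G_{n+1}}$. The guiding idea is that the promise forces an even number of $\sigma_y$ measurements, so that a suitable local measurement produces outcomes whose global parity is deterministically pinned to the promise-dependent term $\mathbb{P}[(\sum_i x^0_i + y^0)/2]$; each sender then only needs to report its $x^1_i$ bit masked by its local outcome.

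Concretely, I would let each party use the first bit of its input to select a Pauli measurement on its qubit of $\ket{G_{n+1}}$: Alice-$i$ measures $\sigma_x$ when $x^0_i=0$ and $\sigma_y$ when $x^0_i=1$, and Bob measures $\sigma_x$ or $\sigma_y$ according to $y^0$. Encoding the $\pm1$ eigenvalue of party $j$ as a bit $a_j\in\{0,1\}$ via $(-1)^{a_j}$, the central step is to establish the key identity
\begin{equation}
\bigoplus_{j} a_j \;=\; \mathbb{P}\!\left[\tfrac{1}{2}\Big(\sum_{i=1}^n x^0_i + y^0\Big)\right],
\end{equation}
holding deterministically on every run allowed by the promise. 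To prove it I would compute the relevant correlator directly: writing $m$ for the number of parties choosing $\sigma_y$, that is $m=\sum_{i=1}^n x^0_i + y^0$, a short calculation using $\sigma_x\ket{0}=\ket{1}$ and $\sigma_y\ket{0}=i\ket{1}$ gives
\begin{equation}
\bra{G_{n+1}}\Big(\bigotimes_{j}\sigma_{u_j}\Big)\ket{G_{n+1}} \;=\; \operatorname{Re}(i^m),
\end{equation}
where $\sigma_{u_j}\in\{\sigma_x,\sigma_y\}$ is party $j$'s chosen observable. The promise guarantees that $m$ is even, so $\operatorname{Re}(i^m)=(-1)^{m/2}=\pm1$; the correlator being $\pm1$ means the product of eigenvalues $\prod_j(-1)^{a_j}=(-1)^{\oplus_j a_j}$ is fixed to $(-1)^{m/2}$ on every run, which is exactly the claimed identity since $\mathbb{P}[m/2]=(m/2)\bmod 2$.

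Finally I would close the protocol: after measuring, Alice-$i$ sends Bob the single bit $c_i := x^1_i \oplus a_{A_i}$, and Bob forms
\begin{equation}
\bigoplus_{i=1}^n c_i \;\oplus\; y^1 \;\oplus\; a_B,
\end{equation}
where $a_B$ is his own outcome. Substituting $c_i$ and regrouping over $\mathbb{F}_2$, the outcome terms combine into $\bigoplus_i a_{A_i}\oplus a_B=\bigoplus_j a_j$, which by the key identity equals the promise term; the expression then collapses to $\bigoplus_i x^1_i \oplus y^1 \oplus \mathbb{P}[(\sum_i x^0_i+y^0)/2]=f_n$, so Bob recovers the function value exactly using one bit from each sender. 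I expect the only genuinely delicate point to be the correlator computation together with the observation that the promise is precisely what makes $m$ even and hence the parity deterministic; everything after that is linear bookkeeping over $\mathbb{F}_2$.
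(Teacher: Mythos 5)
Your proposal is correct and follows essentially the same route as the paper: the identical $\sigma_x/\sigma_y$ measurement protocol, the same masking of $x^1_i$ by the local outcome, and the same deterministic parity identity $\bigoplus_j a_j=\mathbb{P}[S_z/2]$, which the paper cites as Mermin's eigenvalue property of $\ket{G_{n+1}}$ and you re-derive via the correlator $\operatorname{Re}(i^m)$. The only difference is that you prove this GHZ property from scratch rather than quoting it, which is a harmless (and self-contained) variation.
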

\begin{proof}
The protocol proceeds as follows: Each party performs Pauli-\(X\) (\(\sigma_1\)) or Pauli-\(Y\) (\(\sigma_2\)) measurement on their respective shares of \(\ket{G_{n+1}}\) state, depending on whether the first bit of their respective strings is \(0\) or \(1\). Each of the Alices then communicates XOR of their measurement outcome and the second bit of their respective input strings to Bob. Bob’s final output of the desired computation is XOR of all the communications received from Alices, his second bit, and outcome of his Pauli measurement.

To demonstrate that this protocol correctly evaluates the function \(f_n\), we begin by recalling an interesting property (say \(\mathcal{P}\)) of the GHZ state \cite{Mermin1990(1)}: 
\begin{align}
\sigma_{1+\vec{z}}\ket{G_{n+1}}=\begin{cases}
(-1)^{\frac{S_z}{2}}\ket{G_{n+1}},~~~~~\mbox{for~even}~S_z;
\\
(-1)^{\frac{S_z+1}{2}}i\ket{G^-_{n+1}},~~\mbox{for~odd}~S_z,
\end{cases}
\end{align}
where \(\sigma_{1+\vec{z}} := \otimes_{i=1}^{n}\sigma_{1+x^0_i}\otimes\sigma_{1+y^0}\), \(S_z := \sum_{i=1}^{n}x^0_i + y^0\), and $\ket{G^-_{K}}:=(\ket{0}^{\otimes K}-\ket{1}^{\otimes K})/\sqrt{2}$. Notably, in $\mathrm{CC}_n$ game, the promise condition ensures $S_z$ to be even always. Let \(\mathcal{O}(x^0_i) \in \{0,1\}\) denotes outcome of \(i^{\text{th}}\) Alice's local measurement \(\sigma_{1+x^0_i}\), and similarly let \(\mathcal{O}(y^0) \in \{0,1\}\) denotes Bob's outcome of the local measurement \(\sigma_{1+y^0}\). According to the protocol, \(i^{\text{th}}\) Alice communicates \(c_i = \mathcal{O}(x^0_i) \oplus x^1_i\) to Bob. Bob's final computation thus becomes, $\oplus_{i=1}^{n}c_i\oplus y^1\oplus \mathcal{O}(y^0)= \oplus_{i=1}^{n}[\mathcal{O}(x^0_i)\oplus x^1_i]\oplus y^1\oplus \mathcal{O}(y^0)=[\oplus_{i=1}^{n} x^1_i\oplus y^1]\oplus[\oplus_{i=1}^{n}\mathcal{O}(x^0_i)\oplus \mathcal{O}(y^0)]=\oplus_{i=1}^{n} x^1_i\oplus y^1\oplus\mathbb{P}[S_z/2]=f_n({\bf x}_1,\cdots,{\bf x}_{n},{\bf y})$. Here, the step $\oplus_{i=1}^{n}\mathcal{O}(x^0_i)\oplus \mathcal{O}(y^0)=\mathbb{P}[S_z/2]$ follows from property $\mathcal{P}$. This completes the proof.
\end{proof}
To establish advantage of GHZ in computing the function $f_n$ in $\mathrm{CC}_n$ task we now analyze its computability without the resource $\ket{G_{n+1}}$. We start by analyzing the case of $\mathrm{CC}_2$. 
\begin{theorem}\label{theo2}
With one bit of classical communication from each of the Alices, Bob cannot compute the function $f_2$ exactly, even in assistance with classical shared randomness.
\end{theorem}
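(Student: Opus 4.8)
The plan is to establish the stronger claim that no \emph{deterministic} one-bit-per-Alice protocol computes $f_2$ exactly; this immediately settles the shared-randomness case. Since exact (zero-error) computation demands the correct output on every input with certainty, a protocol that mixes deterministic strategies according to some distribution can be exact only if \emph{every} strategy in its support is already exact on all inputs. Hence shared randomness confers no advantage, and it suffices to rule out a deterministic protocol in which Alice-$i$ sends $c_i = a_i(x^0_i, x^1_i) \in \{0,1\}$ and Bob outputs $B(c_1, c_2, y^0, y^1)$.

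First I would pin down the form of Bob's decoder. Because $f_2$ contains $y^1$ only through an XOR, $y^1$ is a free (unpromised) bit, and the Alices never see it, flipping $y^1$ must flip $f_2$ while leaving $c_1, c_2$ untouched; so $B(c_1, c_2, y^0, y^1) = B'(c_1, c_2, y^0) \oplus y^1$ for some $B'$, and correctness reduces to $B'(c_1, c_2, y^0) = x^1_1 \oplus x^1_2 \oplus g$, where $g := \mathbb{P}[(x^0_1 + x^0_2 + y^0)/2]$.

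Next I would show each encoding is injective in its second bit. Taking $y^0 = 1$, the promise forces $(x^0_1, x^0_2) \in \{(0,1),(1,0)\}$ with $g = 1$ throughout; if some $a_i(x^0_i, \cdot)$ were constant, two inputs differing only in that Alice's second bit would yield identical $(c_1, c_2)$ yet demand opposite outputs, a contradiction. Sweeping both configurations forces all four maps to be bijections, so $c_i = x^1_i \oplus h_i(x^0_i)$ for functions $h_i : \{0,1\} \to \{0,1\}$. Setting $\alpha := h_1(0)$, $\gamma := h_1(1)$, $\delta := h_2(0)$, $\beta := h_2(1)$, correctness becomes the demand that $B'(c_1, c_2, y^0) \oplus c_1 \oplus c_2$ equal $K(x^0_1, x^0_2, y^0) := h_1(x^0_1) \oplus h_2(x^0_2) \oplus g$.

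The crux, and the step I expect to require the most care to make airtight, is that the affine encodings make $(c_1, c_2)$ range over all of $\{0,1\}^2$ under \emph{either} admissible first-bit configuration at a fixed $y^0$, so the observed pair $(c_1, c_2)$ carries no information distinguishing the two configurations; $K$ must therefore be independent of $(x^0_1, x^0_2)$ once $y^0$ is fixed. Imposing this at $y^0 = 0$ (configurations $(0,0),(1,1)$ with $g = 0,1$) yields $\alpha \oplus \beta \oplus \gamma \oplus \delta = 1$, whereas $y^0 = 1$ (configurations $(0,1),(1,0)$, both $g = 1$) yields $\alpha \oplus \beta \oplus \gamma \oplus \delta = 0$. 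These parities clash, so no deterministic---and hence no shared-randomness---protocol can compute $f_2$ exactly.
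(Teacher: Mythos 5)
Your proof is correct, and it shares the paper's overall skeleton: reduce to deterministic strategies (exactness forces every strategy in the support of the shared-randomness mixture to be exact on all inputs), absorb $y^1$ into an XOR so that only $f^\prime_2$ matters, and then extract constraints from the $y^0=0$ and $y^0=1$ branches that contradict one another. Indeed, after translating your parameters ($h_i(x^0_i)=\alpha_i x^0_i\oplus\delta_i$ in the paper's notation), your parity obstruction $\alpha\oplus\beta\oplus\gamma\oplus\delta=\alpha_1\oplus\alpha_2$ is exactly the paper's clash between $\alpha_1=\alpha_2$ (forced by $y^0=1$) and $\alpha_1\neq\alpha_2$ (forced by $y^0=0$). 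Where you genuinely differ is in how the constraints are obtained. The paper writes both encoders and both decoders in algebraic normal form, $c_i=\alpha_i x^0_i\oplus\beta_i x^1_i\oplus\gamma_i x^0_i x^1_i\oplus\delta_i$ and $\mathcal{D}_j(c_1,c_2)=\alpha^j_3c_1\oplus\beta^j_3c_2\oplus\gamma^j_3c_1c_2\oplus\delta^j_3$, and derives the restrictions by an explicit (self-described ``lengthy but straightforward'') coefficient calculation. You instead (i) prove injectivity of each $a_i(x^0_i,\cdot)$ by a two-point distinguishability argument at $y^0=1$, which pins the encoders to the affine form $c_i=x^1_i\oplus h_i(x^0_i)$ without ever touching the decoder, and (ii) eliminate the decoder altogether by noting that $(c_1,c_2)$ sweeps all of $\{0,1\}^2$ under each admissible first-bit configuration, so $B'\oplus c_1\oplus c_2$ must equal a configuration-independent constant; comparing the two admissible configurations at each $y^0$ then yields the clashing parities in one line. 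This buys a derivation with no enumeration over decoder coefficients and makes transparent that the obstruction is a single parity; the paper's more computational route has the side benefit of setting up the exhaustive encoding--decoding bookkeeping it reuses in the Appendix to bound the optimal classical success probability by $3/4$.
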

\begin{proof}
For the $\mathrm{CC}_2$ task, the promise condition simplifies to $x^0_1 \oplus x^0_2 \oplus y^0 = 0$, and the function Bob needs to evaluate becomes $f_2({\bf x}_1, {\bf x}_2, {\bf y}) = x^1_1 \oplus x^1_2 \oplus y^1 \oplus (x^0_1 \lor x^0_2 \lor y^0)$. Clearly, Bob can compute $f_2$ if and only if he can compute $f^\prime_2({\bf x}_1, {\bf x}_2, y^0) = x^1_1 \oplus x^1_2 \oplus (x^0_1 \lor x^0_2 \lor y^0)$. Given the promise condition, $f^\prime_2$ can be rewritten as:
\begin{subequations}\label{f3}
\begin{align}
f^\prime_2({\bf x}_1, {\bf x}_2, 0) &= x^1_1 \oplus x^1_2 \oplus (x^0_1 \lor x^0_2), ~ \text{with} ~ x^0_1 = x^0_2; \label{f30}\\
f^\prime_2({\bf x}_1, {\bf x}_2, 1) &= x^1_1 \oplus x^1_2 \oplus 1, \quad \quad \quad \text{with} \quad x^0_1 \neq x^0_2. \label{f31}
\end{align}   
\end{subequations}
A general function $g:\{0,1\}^{\times 2} \mapsto \{0,1\}$ can be expressed as $g^{\alpha,\beta,\gamma,\delta}(u, v) := \alpha u \oplus \beta v \oplus \gamma uv \oplus \delta$, with $\alpha, \beta, \gamma, \delta, u, v \in \{0,1\}$. Consequently, the communication from the $i^{th}$ Alice to Bob can be represented as $c_i := \mathcal{E}_i(x^0_i, x^1_i) = \alpha_i x^0_i \oplus \beta_i x^1_i \oplus \gamma_i x^0_i x^1_i \oplus \delta_i$. To compute the desired function in Eq. (\ref{f3}), Bob applies a generic decoding function to the communication bits $c_1$ and $c_2$ received from Alice-$1$ and Alice-$2$, respectively. Notably, Bob's decoding can depend on the input $y^0$. Denoting the respective decoding functions as $\mathcal{D}_j(c_1, c_2) = \alpha^j_3 c_1 \oplus \beta^j_3 c_2 \oplus \gamma^j_3 c_1 c_2 \oplus \delta^j_3$ for $j\in\{0,1\}$, exact computability of $f_2$ thus demands 
\begin{align}
\mathcal{D}_0=f^\prime_2({\bf x}_1,{\bf x}_2,0)~~~~\&~~~~\mathcal{D}_1=f^\prime_2({\bf x}_1,{\bf x}_2,1). \end{align}
A lengthy but straightforward calculation results yields:
\begin{itemize}
\item[(i)] For $y^0=1$, we have the restrictions $\beta_1=\beta_2=\beta^1_3=\alpha^1_3=1,~\gamma_1= \gamma_2=\gamma^1_3=0$ and $\alpha_1=\alpha_2$;
\item[(ii)] For $y^0 = 0$, we have the restrictions $\alpha^0_3 = \beta^0_3 = 1, \gamma^0_3 = 0, \delta_1= \delta_2\oplus\delta^0_3$ and $\alpha_1 \neq \alpha_2$.
\end{itemize}
Since the restrictions in (i) and (ii) are not consistent with each other, therefore no consistent encodings $\mathcal{E}_1,\mathcal{E}_2$ and decodings $\mathcal{D}_0,\mathcal{D}_1$ exist that can evaluate the function $f_2$ exactly. Having no deterministic strategy their probabilistic mixtures also fail to compute the function $f_2$. This completes the proof.   
\end{proof}
Evidently, the function $f_2$ can be evaluated with classical resources if each Alice communicates $2$ bits to Bob. However, as demonstrated in our subsequent result, a more efficient classical protocol is available.
\begin{proposition}\label{prop1}
The function \( f_2 \) can be evaluated exactly with $2$ bits of classical communication from one Alice and $1$ bit from the other.
\end{proposition}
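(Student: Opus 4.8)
The plan is to exploit Bob's side information together with the promise condition, so that everything involving the $x^0_i$ bits is offloaded onto Bob and only the genuinely unknown parity bits $x^1_1, x^1_2$ need to be communicated. First I would strip away what Bob already holds: since Bob knows $y^1$, he can XOR it in at the end for free, so it suffices to arrange that he can reconstruct $f^\prime_2({\bf x}_1, {\bf x}_2, y^0) = x^1_1 \oplus x^1_2 \oplus (x^0_1 \lor x^0_2 \lor y^0)$, exactly the object appearing in Eq.~(\ref{f3}). The crucial observation is that the promise $x^0_1 \oplus x^0_2 \oplus y^0 = 0$ makes $x^0_2$ a deterministic function of $x^0_1$ and Bob's own bit $y^0$, namely $x^0_2 = x^0_1 \oplus y^0$. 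Hence, if Bob merely learns $x^0_1$, he can locally recover the full triple $(x^0_1, x^0_2, y^0)$ and therefore compute the disjunction $x^0_1 \lor x^0_2 \lor y^0$ entirely on his own, with no further communication.

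Building on this, I would let Alice-$1$ transmit her entire string, i.e. the two bits $(x^0_1, x^1_1)$, while Alice-$2$ transmits only the single bit $x^1_2$; this matches the claimed cost of two bits from one sender and one bit from the other. With $x^0_1$ in hand, Bob computes $x^0_2$ via the promise and then evaluates the disjunction locally; combining this with $x^1_1$ (from Alice-$1$), $x^1_2$ (from Alice-$2$), and his own $y^1$ by XOR yields $f_2$ exactly.

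Since the construction is explicit and every quantity Bob needs is either received or locally computable, I do not expect a genuine obstacle here; the only point requiring verification is that the single bit $x^1_2$ from Alice-$2$ truly suffices, which holds precisely because the promise renders $x^0_2$ redundant given $x^0_1$ and $y^0$. This verification reduces to the straightforward case split $y^0 = 0$ versus $y^0 = 1$ already recorded in Eqs.~(\ref{f30})--(\ref{f31}), confirming correctness in both branches and completing the argument.
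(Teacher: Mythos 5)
Your proposal is correct and follows essentially the same route as the paper's own proof: Alice-$1$ sends her full string, Alice-$2$ sends only $x^1_2$, and Bob recovers $x^0_2 = x^0_1 \oplus y^0$ from the promise condition to evaluate $f_2$ exactly. The additional explicit verification you mention is sound but not needed beyond what the paper already records.
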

\begin{proof}
With $2$ bits let Alice-$1$ classically communicates her string ${\bf x}_1$ to Bob, while Alice-$2$, using $1$ bit channel, communicates her second bit $x^1_2$ only. The promise condition, $x^0_1\oplus x^0_2\oplus y^0=0$, allows Bob to evaluate $x^0_2$ from the information of $y^0$ and $x^0_1$, and therefore the function $f_2$ can be computed exactly.    
\end{proof}
Proposition \ref{prop1}, along with Theorem \ref{theo1} and \ref{theo2}, demonstrates that the multipartite entangled state \( \ket{G_3} \) provides a $1$-bit communication advantage in evaluating the function \( f_2 \) in the \( \mathrm{CC}_2 \) task. We now proceed to establish a scalable communication advantage of multipartite entanglement by considering the \( \mathrm{CC}_n \) task.
\begin{figure}[t!]
\centering
\includegraphics[width=1\linewidth]{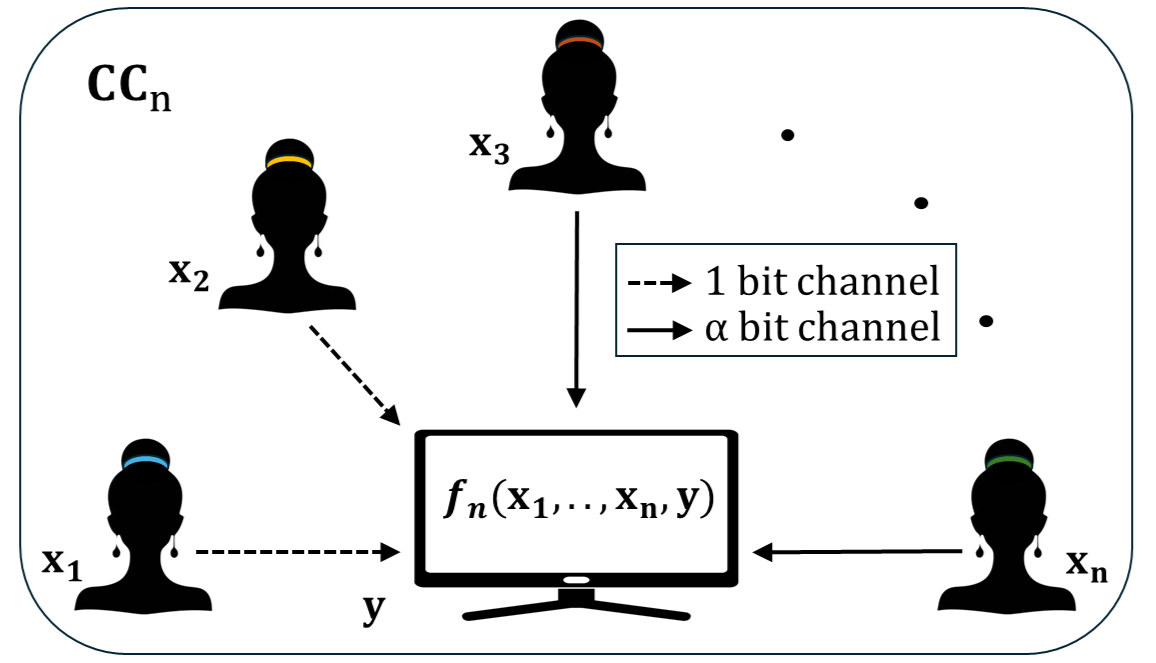}
\includegraphics[width=1\linewidth]{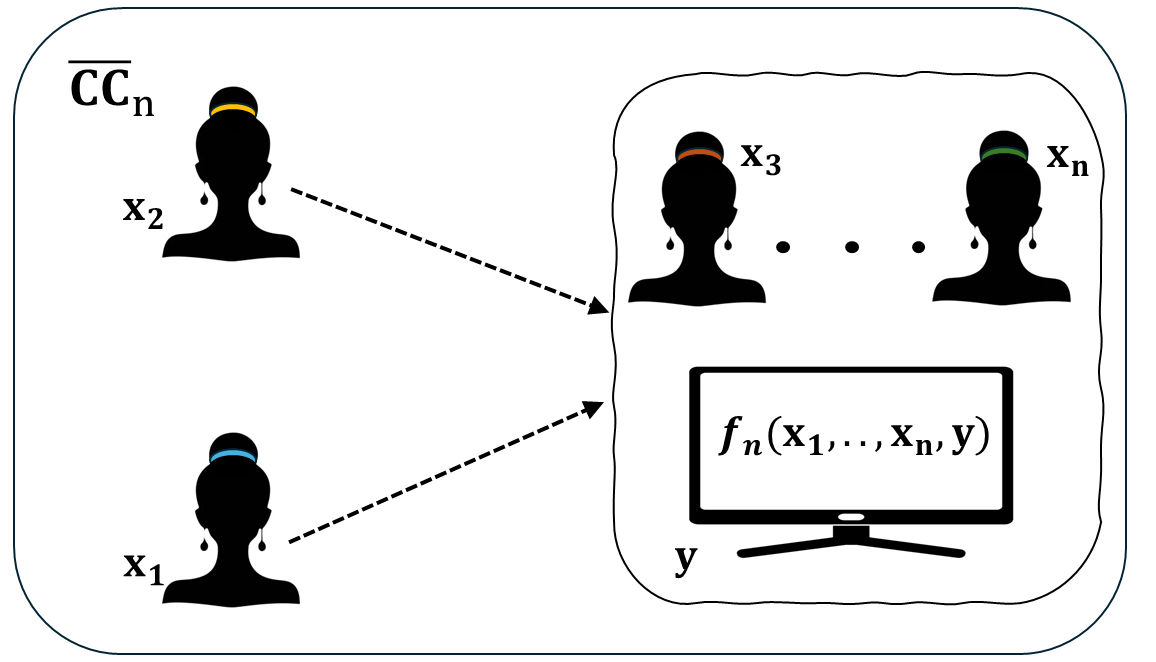}
\caption{Impossibility of evaluating $f_n$ in $\overline{\mathrm{CC}}_n$ scenario implies impossibility of evaluating $f_n$ in $\mathrm{CC}_n$ scenario, for any $\alpha$ channels.}\label{fig}
\vspace{-.5cm}
\end{figure} 
Using a similar argument as in Proposition \ref{prop1}, it is evident that the function \( f_n \) can be evaluated exactly with $1$ bit of communication from one Alice and $2$ bits from each of the others, requiring a total of $( 2n - 1 )$ bits of communication. Referring to Theorem \ref{theo1}, we can thus conclude that the multipartite entangled state \( \ket{G_{n+1}} \) reduces the communication overhead by $( n - 1 )$ bits when evaluating the function \( f_n \). In other words, the communication advantage of multipartite entanglement can be made arbitrarily large by involving more number of senders in the communication complexity task $\mathrm{CC_n}$.
\begin{theorem}\label{theo3}
For $n\ge2$, the function $f_n$ cannot be evaluated by Bob exactly, if two or more Alices are limited to share only $1$-bit of classical channel with Bob.   
\end{theorem}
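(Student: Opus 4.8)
The plan is to reduce the $n$-sender task to the already-settled two-sender task and invoke Theorem \ref{theo2}. Since $f_n$ is symmetric under any permutation of the senders (both $\oplus_{i}x^1_i$ and $\sum_i x^0_i$ are permutation invariant), I may assume without loss of generality that the two senders restricted to $1$-bit channels are Alice-$1$ and Alice-$2$; the remaining senders Alice-$3,\dots,$Alice-$n$ are allowed completely arbitrary channels, whose capacities I leave as unspecified parameters ``$\alpha$'' as in Fig. \ref{fig}.

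The key device is to pass to an auxiliary scenario $\overline{\mathrm{CC}}_n$ in which Bob is simply handed, for free, the entire input strings ${\bf x}_3,\dots,{\bf x}_n$ of the last $n-2$ senders -- equivalently, these senders are absorbed into a single effective receiver together with Bob, while Alice-$1$ and Alice-$2$ keep their $1$-bit channels. This scenario can only be more powerful than $\mathrm{CC}_n$ with any choice of the $\alpha$ channels, because whatever messages Alice-$3,\dots,$Alice-$n$ could transmit are functions of inputs the effective receiver already possesses. Hence it suffices to prove that $f_n$ is not exactly computable in $\overline{\mathrm{CC}}_n$: impossibility there immediately forces impossibility in $\mathrm{CC}_n$, for every $\alpha$.

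To establish impossibility in $\overline{\mathrm{CC}}_n$, I would restrict attention to the sub-family of inputs with $x^0_i=x^1_i=0$ for all $i\ge 3$; any protocol computing $f_n$ everywhere must in particular succeed here. On this slice the promise reduces to $x^0_1\oplus x^0_2\oplus y^0=0$ and, since $S_z=x^0_1+x^0_2+y^0$ and $\oplus_{i=3}^n x^1_i=0$, the target collapses to
\begin{align}
f_n=x^1_1\oplus x^1_2\oplus y^1\oplus\mathbb{P}[(x^0_1+x^0_2+y^0)/2]=f_2({\bf x}_1,{\bf x}_2,{\bf y}).\nonumber
\end{align}
Thus the effective receiver (now holding only the constant, known inputs of the fixed senders alongside ${\bf y}$) is exactly Bob of the $\mathrm{CC}_2$ task, receiving one bit from each of Alice-$1$ and Alice-$2$. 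By Theorem \ref{theo2} no such strategy computes $f_2$ exactly, even with shared randomness, which contradicts exact computability of $f_n$ in $\overline{\mathrm{CC}}_n$. As in Theorem \ref{theo2}, the failure of every deterministic assignment rules out their probabilistic mixtures as well, so shared randomness among all parties does not rescue the protocol.

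The main thing to get right -- rather than any computation -- is the direction of the reduction: I must argue cleanly that merging the unrestricted senders into the receiver can only enlarge the set of implementable strategies, so that an impossibility proved in the ``easier'' $\overline{\mathrm{CC}}_n$ setting genuinely transfers back to $\mathrm{CC}_n$ for arbitrary channels $\alpha$, and that restricting to the zero-slice is legitimate because a correct protocol must handle all promise-consistent inputs. Once these two reductions are in place, Theorem \ref{theo2} does all the remaining work.
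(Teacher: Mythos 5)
Your proposal is correct and follows essentially the same route as the paper's proof: both pass to the auxiliary scenario $\overline{\mathrm{CC}}_n$ in which the unrestricted senders are absorbed into Bob's laboratory, restrict to the slice ${\bf x}_j=00$ for $j\ge 3$ so that the promise and the target function collapse exactly to those of $\mathrm{CC}_2$, and then invoke Theorem \ref{theo2}. Your explicit remarks on the permutation symmetry of $f_n$ and on the direction of the reduction are welcome clarifications but do not change the argument.
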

\begin{proof}
To prove the theorem, it suffices to consider the case of two Alices (Alice-$1$ and Alice-$2$), each sharing a $1$-bit classical channel with Bob. Motivated by the task \(\mathrm{CC}_n\), where all parties are spatially separated. Let us define another task \(\overline{\mathrm{CC}}_n\), where Alice-$3$ through Alice-\(n\) are located in Bob's laboratory [see Fig.\ref{fig}]. Evidently, if a function cannot be evaluated in \(\overline{\mathrm{CC}}_n\) scenario with $1$-bit communication from both Alice-$1$ and Alice-$2$ to Bob, then this function cannot be evaluated in the \(\mathrm{CC}_n\) scenario with the same amount of classical communication from both Alice-$1$ and Alice-$2$ to Bob, independent of the fact what other resources are allowed between Bob and the remaining Alices. In the \(\overline{\mathrm{CC}}_n\) scenario, consider the sub-task of evaluating \(f^\circ_n\) where \(\mathbf{x}_j = 00\) for all \(j \geq 3\). A perfect evaluation of \(f^\circ_n\) implies a perfect evaluation of \(f_2\) in the \(\mathrm{CC}_2\) task with $1$-bit communication from both Alice-$1$ and Alice-$2$ to Bob, which has already been shown to be impossible in Theorem \ref{theo2}. This concludes the proof of present theorem.
\end{proof}
 
It is natural to ask how robust is the quantum advantage established in Theorems \ref{theo2} and \ref{theo3} against the noise. Often in experiment the ideal GHZ state becomes noisy as \( G^{(p)}_{n+1} := (1-p)\ket{G_{n+1}}\bra{G_{n+1}} + p(\mathbb{I}_2/2)^{\otimes n+1}\) due to the presence of white noise, where $p\in[0,1]$. Recall that a classical strategy \(\mathcal{S}\) consists of an encoding-decoding tuple \(\mathcal{S}\equiv\{c_i=\mathcal{E}_i({\bf x}_i),\mathcal{D}_0(c_1,\cdots,c_{n}),\mathcal{D}_1(c_1,\cdots,c_{n})\}\). Let `\( b \)' denote the bit value evaluated by Bob under strategy \(\mathcal{S}\), and the success of this strategy is quantified as
\begin{align*}
P^n_{\mathcal{S}} := \sum_{\{{\bf x}_i\},{\bf y}} p({\bf x}_1,\cdots,{\bf x}_{n},{\bf y}) p(b=f_n | {\bf x}_1,\cdots,{\bf x}_{n},{\bf y}).   
\end{align*}
Given the classical resource \(\mathcal{R} \equiv \{\text{Each~Alice} \to \text{Bob} : 1~\text{bit}~\text{channel}\}\) and classical shared randomness, the parties can implement various strategies \(\{\mathcal{S}_k\}\). The optimal success with resource \(\mathcal{R}\) is then defined as \( P^n_{\mathcal{R}} := \max_{\{\mathcal{S}_k\}} P^n_{\mathcal{S}_k} \). A quantum strategy \(\mathcal{Q}\) using resources \(\{\mathcal{R} \cup \psi_{A_1\cdots B}\}\) will exhibit an advantage in the communication complexity task \(\mathrm{CC}_n\) if \( P^n_{\mathcal{Q}} > P^n_{\mathcal{R}} \). In such a case, it is evident that the nonclassicality of the shared quantum state \(\psi_{A_1 \cdots B} \in \mathcal{D}(\mathbb{C}^{d_{A_1}}\otimes \cdots \otimes \mathbb{C}^{d_B})\) underlies the quantum advantage. With this we can now prove our next result. 
\begin{theorem}\label{theo4}
For \( n \geq 2 \), the states \( G^{(p)}_{n+1} \), with \( 0 \leq p < 1/2 \), offer an advantage in evaluating the function \( f_{n} \) when supplemented with the classical resource \( \mathcal{R} \).  
\end{theorem}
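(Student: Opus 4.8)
The plan is to compute the quantum success $P^n_{\mathcal{Q}}$ of the entangled protocol run on the noisy state and to upper bound the optimal classical success by $P^n_{\mathcal{R}} \le 3/4$; the advantage then follows because $P^n_{\mathcal{Q}} = 1 - p/2 > 3/4 \ge P^n_{\mathcal{R}}$ whenever $p < 1/2$.

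First I would evaluate $P^n_{\mathcal{Q}}$ by running the protocol of Theorem~\ref{theo1} on $G^{(p)}_{n+1}$, exploiting that this state is a convex mixture of $\ket{G_{n+1}}$ (weight $1-p$) and the maximally mixed state (weight $p$). On the GHZ component, property $\mathcal{P}$ forces exact computation, so the conditional success is $1$. On the white-noise component every local Pauli-$X$/$Y$ measurement on a maximally mixed qubit returns $0$ or $1$ with equal probability, independently across the $n+1$ parties; Bob's decoded bit is the XOR of these $n+1$ i.i.d.\ fair outcomes together with input-dependent but outcome-independent offsets, hence a fair coin uncorrelated with $f_n$, giving conditional success $1/2$. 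Linearity of the measurement statistics over the mixture then yields $P^n_{\mathcal{Q}} = (1-p) + p/2 = 1 - p/2$.

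The substantial part is the classical bound $P^n_{\mathcal{R}} \le 3/4$. By convexity the optimum over shared-randomness-assisted strategies is attained by a deterministic one, so I fix encodings $c_i = \mathcal{E}_i(x^0_i, x^1_i)$ and a $y^0$-dependent decoding $\mathcal{D}_{y^0}(c_1, \dots, c_n)$ (the bit $y^1$ is known to Bob and simply XORed in). I would first record the algebraic identity $\mathbb{P}[S_z/2] = \bigoplus_{i<j} b_i b_j$ over the bits $b = (x^0_1,\dots,x^0_n,y^0)$, valid under the promise, which exposes the obstruction as a quadratic form. I then partition the promise-consistent $x^0$-configurations by whether some Alice's encoding is constant in her second bit. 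If Alice-$i$'s map is constant for her value of $x^0_i$, flipping $x^1_i$ flips the target $f_n$ while leaving all communicated bits and hence Bob's output unchanged, forcing success exactly $1/2$ on those configurations. On the complementary configurations every $c_i = x^1_i \oplus e_i(x^0_i)$ is a bijection in $x^1_i$, so the independent uniform pads $x^1_i$ render $(c_1,\dots,c_n)$ uniform and statistically independent of the $x^0$-data; Bob's output is then a function of noise alone, and his success conditioned on $y^0$ is at most $\max_v \Pr[h = v \mid y^0]$, where $h$ equals the quadratic form $\bigoplus_{i<j} x^0_i x^0_j$ plus terms affine in the $x^0_i$ and known to Bob.

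The hard part is showing this guessing bound averages to at most $3/4$ for every $n$. For $n=2$ the constraint $x^0_1 \oplus x^0_2 = y^0$ makes $h$ balanced for one value of $y^0$ and constant for the other (for either parity of the affine coefficients), giving exactly $\tfrac12(\tfrac12 + 1) = \tfrac34$; the explicit strategy in which each Alice sends $x^1_i$ and Bob corrects the received XOR by the value of $\mathbb{P}$ (certain when $y^0 = 1$, a fixed guess when $y^0 = 0$) shows this is tight, matching Theorem~\ref{theo2}. For $n \ge 3$ the nondegenerate quadratic part only makes $h$ more balanced, so I expect the bound to survive; I would verify this by evaluating the bias of $\bigoplus_{i<j} x^0_i x^0_j$ over the promise-constrained uniform distribution and by treating the residual cases in which an encoding is constant for one value of $x^0_i$ but not the other. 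Collecting the two regimes gives $P^n_{\mathcal{R}} \le 3/4$, and combining with $P^n_{\mathcal{Q}} = 1 - p/2$ establishes the advantage for all $0 \le p < 1/2$.
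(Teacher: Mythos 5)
Your overall architecture matches the paper's: compute the quantum success on the noisy state as $P^n_{\mathcal{Q}}=(1-p)\cdot 1+p\cdot\tfrac12=1-p/2$ (your argument for this half is correct and essentially the paper's), bound $P^n_{\mathcal{R}}\le 3/4$, and compare. Where you genuinely diverge is the classical bound: the paper proves $P^2_{\mathcal{R}}\le 3/4$ by exhaustively enumerating extreme encoding--decoding pairs (the appendix table) and lifts it to $n\ge 3$ via the co-location/restriction argument of Theorem~\ref{theo3}, whereas you attempt one structural argument for all $n$. Your reductions are sound as far as they go: deterministic strategies suffice by convexity; $\mathbb{P}[S_z/2]=\bigoplus_{i<j}b_ib_j$ does hold under the promise; configurations at which some encoding is constant in $x^1_i$ contribute exactly $1/2$ by the pairing argument; and when every encoding is a bijection in $x^1_i$ the communicated bits form a uniform pad independent of the $x^0$-data, so Bob's success is controlled by the guessing probability of $h$.

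However, the proof is not complete: the crux of your route --- that $\tfrac12\sum_{w}\max_v\Pr[h=v\mid y^0=w]\le 3/4$ for every $n\ge 3$ and \emph{every} choice of the affine offsets $d_i=e_i(0)\oplus e_i(1)$ --- is exactly the step you defer (``I expect the bound to survive''), and the heuristic that the quadratic part ``only makes $h$ more balanced'' is not an argument, since both conditional biases must be controlled simultaneously and uniformly over the $2^n$ offset choices. The claim is true and closable by the standard Gauss-sum device: writing $(-1)^{\binom{k}{2}}=\mathrm{Re}[(1-i)i^k]$ gives $|\mathrm{bias}(h\mid y^0=0)|+|\mathrm{bias}(h\mid y^0=1)|=2\max(|A|,|B|)$ with $|A|,|B|\le 2^{(1-n)/2}$, hence the sum is at most $2^{(3-n)/2}\le 1$ for $n\ge 3$ (and equals $1$ for $n=2$ by direct evaluation), which is equivalent to the desired $3/4$. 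You should also actually carry out the ``residual'' mixed case you mention: if $m\ge 1$ Alices have an encoding that is bijective for only one value of $x^0_i$, the all-bijective region has probability $q=2^{-m}\le 1/2$ under the promise-uniform distribution, so $\tfrac12(1-q)+q\cdot 1\le 3/4$ already suffices there. With those two steps filled in your argument is correct and in fact more uniform than the paper's, which never analyzes the $n\ge 3$ strategies directly.
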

\begin{proof}
We start by deriving nontrivial bounds on the optimal success probability $P^n_{\mathcal{R}}$ when the parties avail classical resource $\mathcal{R}$ only:
\begin{align*}
P^{n}_{\mathcal{R}}&\ge\frac{1}{2}+\frac{1}{\lceil 2^{(n+1)/2} \rceil}:=P^{n}_{lower},\\
P^{n}_{\mathcal{R}}&\leq \frac{3}{4}:= P^{n}_{upper},~\mbox{equality~for~} n=2,3. 
\end{align*}
To establish the lower bound $P^{n}_{lower}$ for generic $n$, it suffices to present an explicit protocol utilizing the resource $\mathcal{R}$. As demonstrated in Theorem \ref{theo1}, the function $f_n = \oplus_{i=1}^{n} x^1_i \oplus y^1 \oplus \mathbb{P}[(\sum_{i=1}^{n} x^0_i + y^0)/2]$ can be exactly evaluated with the resource $\{\mathcal{R} \cup \ket{G_{n+1}}\}$, since the condition $\oplus_{i=1}^{n} \mathcal{O}(x^0_i) \oplus \mathcal{O}(y^0) = \mathbb{P}[S_z/2]$ is perfectly satisfied by the GHZ state. Recall that, this particular condition is the requirement of $(n+1)$-player Pseudo-Telepathy game introduced by Mermin \cite{Mermin1990(1)}, where the maximum achievable success probability with classical correlation is bounded above by $P^{n}_{lower}$ (see also \cite{Brassard2003}), thereby establishing that $P^n_{\mathcal{R}} \geq P^{n}_{lower}$ for all $n \geq 2$.

To determine the upper bound, let us first consider the case $n = 2$. Only a limited number of extreme encoding-decoding strategies are possible in this case. A systematic analysis reveals that $P^{2}_{\mathcal{R}} \leq 3/4$ (see Appendix), hence $P^{2}_{upper}=3/4$. For $n\ge3$, we can apply a similar reasoning as of Theorem \ref{theo3}. Namely, instead of \(\mathrm{CC}_n\) if we consider the task \(\overline{\mathrm{CC}}_n\), then $P^{n}_{upper}$ in the barred task can be at most $3/4$; otherwise it will imply that $P^{2}_{upper}>3/4$, a contradiction. Therefore in the \(\mathrm{CC}_n\) task too we have $P^{n}_{upper}\le3/4$. Notably, for $n=3$ this bound is achievable as it matches with $P^{3}_{lower}$. However, for $n>3$ it remains to be analyzed further whether the value $3/4$ can be achieved with resource $\mathcal{R}$. 

Now, with the resource $\{\mathcal{R} \cup G^{(p)}_{n+1}\}$ and following the protocol from Theorem \ref{theo1}, the success probability is given by $P_{\mathcal{Q}} = (2-p)/2$. Considering the optimal success $P^n_{\mathcal{R}}$ of \(\mathrm{CC}_n\) to be its upper bound $3/4$, we obtain that the resource $\{\mathcal{R}\cup G^{(p)}_{n+1}\}$ is advantageous over $\mathcal{R}$, i.e., $P_{\mathcal{Q}}>3/4$ for the parameter range $p \in [0, 1/2)$. This completes the proof.            
\end{proof}
~\vspace{-.6cm}\\
The states \( G^{(p)}_{n+1} \) are known to be fully separable if and only if \( 1/[1 + 2^{-n}] \leq p \leq 1 \) \cite{Dur2000(1), Schack2000}, whereas they are genuinely multipartite entangled if and only if \( 0 \leq p < 1/[2 - 2^{-n}]\) \cite{Guhne2010}. Therefore, the advantage established in Theorem \ref{theo4} is confined to the parameter range associated with genuinely multipartite entanglement. However, the potential for advantage with non-genuine states remains open. For \( n \ge 4 \), if \( P^{(n)}_{\mathcal{R}}<3/4\), advantage may still be achieved with non-genuine but inseparable \( G^{(p)}_{n+1} \) (see Fig.\ref{figm}).

\section{DISCUSSIONS}
Quantum advantages are quite difficult to establish and even more challenging to validate experimentally. For instance, while quantum computing is anticipated to offer exponential speedups, such as in integer factorization, these expectations rest on unproven mathematical assumptions and require the development of fault-tolerant quantum computers -- an exceedingly difficult task at present \cite{Gidney2021}. In contrast, quantum resources often promise demonstrable advantages in communication complexity problems involving distributed computing tasks \cite{Brukner2004,Yamasaki2018,Tavakoli2020,Ho2022}. Within this setup, our study reveals a scalable advantage of multipartite entanglement, specifically the multi-qubit GHZ states, in reducing communication overhead while evaluating functions with inputs distributed across multiple distant servers. Remarkably, this advantage persists even under noisy conditions. Multiple research groups have already generated GHZ states across various architectures, with ongoing progress toward scaling subsystem numbers \cite{Gao2010,Zhong2018,Omran2019,Song2019,Pogorelov2021,Bao2024}. The robustness of the present protocol against noise highlights its strong potential for practical experimental implementation in these systems.

Our study opens several promising directions for future research. For instance, exploring the robustness of Theorem \ref{theo4} across generic $\mathrm{CC}_n$ tasks could yield deeper insights into the nonclassical resources underpinning the observed advantages. Additionally, the perfect advantage demonstrated in Theorem \ref{theo1} likely requires an $(n+1)$-qubit GHZ state. A potential route to prove this could involve extending the concept of self-testing \cite{Mayers2004}, where quantum states and corresponding measurements are uniquely determined (up to local isomorphism) based on Bell game statistics \cite{Supi2020}. Notably, GHZ game statistics have been shown to self-test the GHZ state \cite{Breiner2019}. Extending this framework from Bell games to scenarios involving limited communication may provide a basis for proving that the $(n+1)$-qubit GHZ state is necessary for achieving perfect advantage in $\mathrm{CC}_n$ tasks. Another compelling question is whether the 3-qubit GHZ state can further reduce communication complexity, as suggested by Theorems \ref{theo1} and \ref{theo2} -- an issue which has been addressed positively in  \cite{Chakraborty2024(1)}.

\onecolumngrid
\section{Appendix}
\subsection{Optimal success in $\mathrm{CC_2}$ with the resource $\mathcal{R}$}
\begin{center}
\begin{table}[b!]
\begin{tabular}{|c||c|c|c|c|c|c|c|c|c|c|c|c|c|c|c|c|}
\hline
\backslashbox{~$\mathcal{E}_1$~}{~$\mathcal{E}_2$~}& $\mathcal{E}^{0}_2$&~~$P_S$~~&$\mathcal{E}^{8}_2$&~~$P_S$~~& $\mathcal{E}^{4}_2$&~~$P_S$~~& $\mathcal{E}^{12}_2$&~~$P_S$~~& $\mathcal{E}^{14}_2$ &~~$P_S$~~&$\mathcal{E}^{2}_2$&~~$P_S$~~&$\mathcal{E}^{6}_2$&~~$P_S$~~& $\mathcal{E}^{10}_2$&~~$P_S$~~\\ \hline\hline
$\mathcal{E}^{0}_1$&~$\mathcal{D}^0_0,\mathcal{D}^0_1$~&$\frac{1}{2}$&~~ $\mathcal{D}^0_0,\mathcal{D}^0_1$~~&$\frac{1}{2}$&$\mathcal{D}^0_0,\mathcal{D}^0_1$&$\frac{1}{2}$&$\mathcal{D}^0_0,\mathcal{D}^0_1$&$\frac{1}{2}$&$\mathcal{D}^0_0,\mathcal{D}^0_1$&$\frac{1}{2}$&$\mathcal{D}^0_0,\mathcal{D}^0_1$&$\frac{1}{2}$&$\mathcal{D}^0_0,\mathcal{D}^0_1$&$\frac{1}{2}$&$\mathcal{D}^0_0,\mathcal{D}^0_1$&$\frac{1}{2}$\\\hline
$\mathcal{E}^{8}_1$&$\mathcal{D}^0_0,\mathcal{D}^0_1$&$\frac{1}{2}$&$\mathcal{D}^0_0,\mathcal{D}^0_1$&$\frac{1}{2}$&$\mathcal{D}^0_0,\mathcal{D}^0_1$&$\frac{1}{2}$&$\mathcal{D}^0_0,\mathcal{D}^0_1$&$\frac{1}{2}$&$\mathcal{D}^0_0,\mathcal{D}^0_1$&$\frac{1}{2}$&$\mathcal{D}^0_0,\mathcal{D}^0_1$&$\frac{1}
{2}$&$\mathcal{D}^0_0,\mathcal{D}^0_1$&$\frac{1}
{2}$&$\mathcal{D}^0_0,\mathcal{D}^0_1$&$\frac{1}{2}$\\\hline
$\mathcal{E}^{4}_1$&$\mathcal{D}^0_0,\mathcal{D}^0_1$&$\frac{1}{2}$&$\mathcal{D}^0_0,\mathcal{D}^0_1$&$\frac{1}{2}$&\cellcolor{blue!10}$\mathcal{D}^0_0,\mathcal{D}^{13}_1$&\cellcolor{blue!10}$\frac{3}{4}$ &\cellcolor{blue!10}$\mathcal{D}^{12}_0,\mathcal{D}^{0}_1$&\cellcolor{blue!10}$\frac{3}{4}$ &\cellcolor{blue!10}$\mathcal{D}^{12}_0,\mathcal{D}^{13}_1$&\cellcolor{blue!10}$\frac{3}{4}$& \cellcolor{blue!10}$\mathcal{D}^{13}_0,\mathcal{D}^{13}_1$&\cellcolor{blue!10}$\frac{3}{4}$\cellcolor{blue!10}&\cellcolor{blue!10}$\mathcal{D}^{12}_0,\mathcal{D}^{13}_1$    &\cellcolor{blue!10}$\frac{3}{4}$&\cellcolor{blue!10}$\mathcal{D}^{12}_0,\mathcal{D}^{12}_1$   &\cellcolor{blue!10}$\frac{3}{4}$\\\hline
$\mathcal{E}^{12}_1$       & $\mathcal{D}^0_0, \mathcal{D}^0_1 $ & $ \frac{1}{2}$ & $\mathcal{D}^0_0, \mathcal{D}^0_1 $ & $ \frac{1}{2}$ &\cellcolor{blue!10} $\mathcal{D}^{12}_0, \mathcal{D}^{0}_1 $  & \cellcolor{blue!10}$ \frac{3}{4}$ &\cellcolor{blue!10} $\mathcal{D}^0_0, \mathcal{D}^{12}_1 $    & \cellcolor{blue!10} $ \frac{3}{4}$ & \cellcolor{blue!10}$\mathcal{D}^{12}_0, \mathcal{D}^{12}_1 $ &\cellcolor{blue!10} $ \frac{3}{4}$ &\cellcolor{blue!10} $\mathcal{D}^{12}_0, \mathcal{D}^{13}_1 $ &\cellcolor{blue!10} $ \frac{3}{4}$ & \cellcolor{blue!10}$\mathcal{D}^{12}_0, \mathcal{D}^{12}_1 $ & \cellcolor{blue!10}$ \frac{3}{4}$ &\cellcolor{blue!10} $\mathcal{D}^{13}_0, \mathcal{D}^{12}_1 $ & \cellcolor{blue!10} $ \frac{3}{4}$ \\ \hline
$\mathcal{E}^{14}_1$         & $\mathcal{D}^0_0, \mathcal{D}^0_1 $ & $ \frac{1}{2}$ & $\mathcal{D}^0_0, \mathcal{D}^0_1 $ & $ \frac{1}{2}$ &\cellcolor{blue!10} $\mathcal{D}^{12}_0, \mathcal{D}^{13}_1 $ &\cellcolor{blue!10} $ \frac{3}{4}$ &\cellcolor{blue!10} $\mathcal{D}^{12}_0, \mathcal{D}^{12}_1 $ & \cellcolor{blue!10}$ \frac{3}{4}$ & $\mathcal{D}^{12}_0, \mathcal{D}^{0}_1 $  & $ \frac{5}{8}$ & $\mathcal{D}^0_0, \mathcal{D}^{13}_1 $    & $ \frac{5}{8}$ & $\mathcal{D}^{12}_0, \mathcal{D}^{0}_1 $      & $ \frac{5}{8}$ & $\mathcal{D}^0_0, \mathcal{D}^{12}_1 $         & $ \frac{5}{8}$ \\ \hline
$\mathcal{E}^{2}_1$      & $\mathcal{D}^0_0, \mathcal{D}^0_1 $ & $ \frac{1}{2}$ & $\mathcal{D}^0_0, \mathcal{D}^0_1 $ & $ \frac{1}{2}$ &\cellcolor{blue!10} $\mathcal{D}^{13}_0, \mathcal{D}^{13}_1 $ &\cellcolor{blue!10} $ \frac{3}{4}$ &\cellcolor{blue!10} $\mathcal{D}^{12}_0, \mathcal{D}^{13}_1 $ &\cellcolor{blue!10} $ \frac{3}{4}$ & $\mathcal{D}^0_0, \mathcal{D}^{13}_1 $    & $ \frac{5}{8}$ & $\mathcal{D}^{13}_0, \mathcal{D}^{0}_1 $  & $ \frac{5}{8}$ & $\mathcal{D}^0_0, \mathcal{D}^{13}_1 $        & $ \frac{5}{8}$ & $\mathcal{D}^{12}_0, \mathcal{D}^{0}_1 $       & $ \frac{5}{8}$ \\ \hline
$\mathcal{E}^{6}_1$  & $\mathcal{D}^0_0, \mathcal{D}^0_1 $ & $ \frac{1}{2}$ & $\mathcal{D}^0_0, \mathcal{D}^0_1 $ & $ \frac{1}{2}$ & \cellcolor{blue!10}$\mathcal{D}^{12}_0, \mathcal{D}^{13}_1 $ &\cellcolor{blue!10} $ \frac{3}{4}$ & \cellcolor{blue!10}$\mathcal{D}^{12}_0, \mathcal{D}^{12}_1 $ &\cellcolor{blue!10} $ \frac{3}{4}$ & $\mathcal{D}^{12}_0, \mathcal{D}^{0}_1 $  & $ \frac{5}{8}$ & $\mathcal{D}^0_0, \mathcal{D}^{13}_1 $    & $ \frac{5}{8}$ & $\mathcal{D}^{12}_0, \mathcal{D}^{0}_1 $      & $ \frac{5}{8}$ & $\mathcal{D}^0_0, \mathcal{D}^{12}_1 $         & $ \frac{5}{8}$ \\ \hline
$\mathcal{E}^{10}_1$ & $\mathcal{D}^0_0, \mathcal{D}^0_1 $ & $ \frac{1}{2}$ & $\mathcal{D}^0_0, \mathcal{D}^0_1 $ & $ \frac{1}{2}$ &\cellcolor{blue!10} $\mathcal{D}^{12}_0, \mathcal{D}^{12}_1 $ &\cellcolor{blue!10} $ \frac{3}{4}$ & \cellcolor{blue!10}$\mathcal{D}^{13}_0, \mathcal{D}^{12}_1 $ & \cellcolor{blue!10}$ \frac{3}{4}$ & $\mathcal{D}^0_0, \mathcal{D}^{12}_1 $    & $ \frac{5}{8}$ & $\mathcal{D}^{12}_0, \mathcal{D}^{0}_1 $  & $ \frac{5}{8}$ & $\mathcal{D}^0_0, \mathcal{D}^{12}_1 $        & $ \frac{5}{8}$ & $\mathcal{D}^{13}_0, \mathcal{D}^{0}_1 $       & $ \frac{5}{8}$ \\ \hline
\end{tabular}
\caption{The optimal success of the $\mathrm{CC_2}$ task using the resource $\mathcal{R}$. For each encoding pair $(\mathcal{E}^p_1, \mathcal{E}^q_2) \in \mathcal{G}^{{\bf x}_1}_{E} \times \mathcal{G}^{{\bf x}_2}_{E}$, we provide a corresponding decoding pair and the associated optimal success probability. Strategies achieving the maximum success rate of $3/4$ are highlighted. For all other encoding strategies, the success probability is strictly lower than this optimal value.}\label{tab1}
\end{table}
\end{center}
As already mentioned, a function $g:\{0,1\}^{\times 2} \mapsto \{0,1\}$ can be expressed as 
\begin{align}
g^{\alpha,\beta,\gamma,\delta}(u,v):&=(\alpha\land u) \oplus (\beta\land v) \oplus (\gamma\land u\land v) \oplus \delta\nonumber\\
&=\alpha u \oplus \beta v \oplus \gamma u v \oplus \delta,\label{ap1}    
\end{align}    
with $\alpha, \beta, \gamma, \delta, u, v \in \{0,1\}$. We can denote $g^{\alpha,\beta,\gamma,\delta}$ as $g^m$, where $m:= (2)^3\alpha + (2)^2\beta + (2)^1\gamma + (2)^0\delta$~. Accordingly we have $\mathcal{G}^{uv}:=\mathcal{G}^{uv}_{E}\cup\mathcal{G}^{uv}_{O}$, where
\begin{subequations}
\begin{align}
\mathcal{G}^{uv}_{E}&:=\left\{\!\begin{aligned}
g^0(u,v)&=0;~~~~~~~~~g^2(u,v)=u\land v;\\
g^4(u,v)&=v;~~~~~~~~~g^6(u,v)=\overline{u}\land v;\\
g^8(u,v)&=u;~~~~~~~~g^{10}(u,v)=u\land\overline{v};\\
g^{12}(u,v)&=u\oplus v;~~g^{14}(u,v)=u\lor v
\end{aligned}\right\},\\
\mathcal{G}^{uv}_{O}&:=\left\{
g^{2k+1}(u,v)=\overline{g^{2k}(u,v)}|~k\in\{0,\cdots,7\}\right\}.
\end{align}
\end{subequations}
While evaluating the function $f_2({\bf x}_1,{\bf x}_2,{\bf y})$ in $\mathrm{CC_2}$ task by Bob with the help of $1$ bit communication each from Alice-1 and Alice-2, they can follow a deterministic strategy as specified by a four tuple 
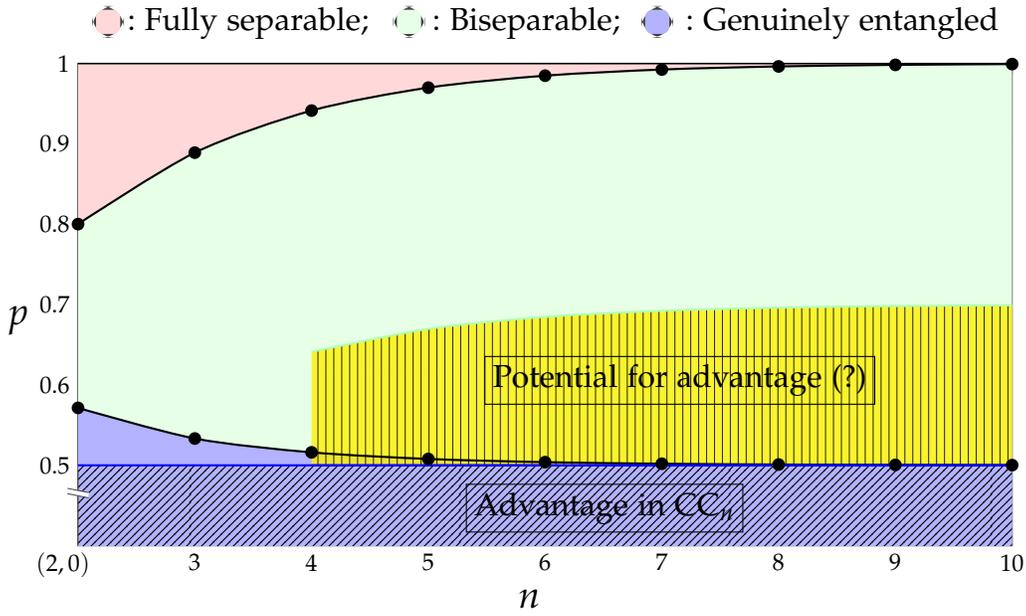
\begin{figure}[t!]
\centering
\begin{tikzpicture}
\begin{axis}[width=14cm,height=8cm,xmin=2,xmax=10,ymin=.4,ymax=1,
xtick={3,4,5,6,7,8,9,10},
ytick={.5,.6,.7,.8,.9,1.0}]
\plot[name path=A,smooth,thick,black] plot coordinates {(2,1)(10,1)};
\plot[name path=B,smooth,thick,mark=*,black] plot coordinates {(2,4/5)(3,8/9)(4,16/17)(5,32/33)(6,64/65)(7,128/129)(8,256/257)(9,512/513)(10,1024/1025)};
\plot[name path=C,smooth,thick,mark=*,color=black] plot coordinates {(2,4/7)(3,8/15)(4,16/31)(5,32/63)(6,64/127)(7,128/255)(8,256/511)(9,512/1023)(10,1024/2047)};
\plot[name path=D,smooth,thick,blue] plot coordinates {(2,.5)(10,.5)};
\plot[name path=D1,smooth,thick,blue] plot coordinates {(4,.5)(10,.5)};
\plot[name path=E,smooth,thick,mark=,color=green!30] plot coordinates {(4,16/17-.3)(5,32/33-.3)(6,64/65-.3)(7,128/129-.3)(8,256/257-.3)(9,512/513-.3)(10,1024/1025-.3)};
\plot[name path=G,smooth,thick,black] plot coordinates {(2,0)(10,0)};

\plot[fill=red!15] fill between[of=A and B];
\plot[fill=green!10] fill between[of=B and C];
\plot[fill=blue!30] fill between[of=C and G];
\plot[black,fill=blue!30,postaction=
{pattern=north east lines}] fill between[of=D and G];
\plot[black,fill=yellow!100,opacity=0.8,postaction=
{pattern=vertical lines}] fill between[of=E and D1];
\end{axis}
\draw[thin,black!70,TwoMarks=0.70] (0,0) -- (0,1);

\node[] at (.4, 6.97){\begin{tcolorbox}[width=12pt,height=12pt,colback=red!15,outer arc=3mm]\end{tcolorbox}};
\node[] at (2.2, 6.95){\large{~:~Fully separable;}};

\node[] at (4.4, 6.97){\begin{tcolorbox}[width=12pt,height=12pt,colback=green!10,outer arc=3mm]\end{tcolorbox}};
\node[] at (5.9, 6.95){\large{~:~Biseparable;}};

\node[] at (7.7, 6.97){\begin{tcolorbox}[width=12pt,height=12pt,colback=blue!30,outer arc=3mm]\end{tcolorbox}};
\node[] at (10.1, 6.95){\large{~:~Genuinely entangled}};

\node[draw] at (7, .5){\large{Advantage in $\mathrm{CC}_n$}};
\node[draw] at (8, 2.2)  {\large{Potential for advantage (?)}};
\node[] at (6, -0.7){\Large\bf{{$n$}}};
\node[] at (-.8,3){\Large\bf{{$p$}}};
\node[] at (-.2,-.25){$(2,0)$};

\end{tikzpicture}
\caption{ {\bf Noise tolerance of entanglement-based protocol:} We consider the white noise model for noisy GHZ states, defined as \( G^{(p)}_{n+1} := (1-p)\ket{G_{n+1}}\bra{G_{n+1}} + p(\mathbb{I}_2/2)^{\otimes n+1} \). These states are fully separable if and only if \( 1/[1 + 2^{-n}] \leq p \leq 1 \), and they remain genuinely multipartite entangled if and only if \( 0 \leq p < 1/[2 - 2^{-n}] \) \cite{Dur2000(1), Schack2000,Guhne2010}. For any $n>2$, the states provide an advantage in the corresponding $\mathrm{CC_n}$ task for $p \in [0,1/2]$. For $n=2$ and $n=3$, this advantage is not possible for $p > 1/2$. However, for larger $n$, such possibilities are not excluded, and the quantum protocol may exhibit greater noise robustness. Furthermore, robustness analysis of the quantum protocol can also be extended to colored noise models.}\label{figm}
\end{figure} 
\begin{align}
&\mathcal{S}(p,q,r,s)\equiv(\mathcal{E}^p_1,\mathcal{E}^q_2,\mathcal{D}^r_0,\mathcal{D}^s_1)\in\mathcal{G}^{{\bf x}_1}\times\mathcal{G}^{{\bf x}_2}\times\mathcal{G}^{\bf c}\times\mathcal{G}^{\bf c},\nonumber\\
&{\bf c}:=c_1c_2,~~c_1:=\mathcal{E}^p_1(x^0_1,x^1_1)~\&~c_2:=\mathcal{E}^q_2(x^0_2,x^1_2).
\end{align}
Note that the function $g^{\alpha,\beta,\gamma,\delta}$ in Eq.(\ref{ap1}) have the following property:
\begin{subequations}\label{app}
\begin{align}
g^{\alpha,\beta,\gamma,\delta}(u,v)&=g^{\alpha\oplus\gamma,\beta,\gamma,\delta\oplus\beta}(u,\overline{v})\\
&=g^{\alpha,\beta\oplus\gamma,\gamma,\delta\oplus\alpha}(\overline{u},v)\\
&=g^{\alpha\oplus\gamma,\beta\oplus\gamma,\gamma,\delta\oplus\alpha\oplus\beta\oplus\gamma}(\overline{u},\overline{v}).
\end{align} 
\end{subequations}
Due to property (\ref{app}), without loss of any generality, we can restrict Alices' encoding $(\mathcal{E}^p_1,\mathcal{E}^q_2)$ in $\mathcal{G}^{{\bf x}_1}_{E}\times \mathcal{G}^{{\bf x}_2}_{E}$, while decodings on Bob's end being arbitrary. Therefore, we can restrict our analysis  only for the strategies
\begin{align}
\mathcal{S}(p,q,r,s)\in\mathcal{G}_E^{{\bf x}_1}\times\mathcal{G}_E^{{\bf x}_2}\times\mathcal{G}^{\bf c}\times\mathcal{G}^{\bf c}.
\end{align}
Given a fixed encoding pair $(\mathcal{E}^{p^\star}_1,\mathcal{E}^{q^\star}_2)$ for Alices, Bob needs to choose suitable decodings $(\mathcal{D}^{r^\star}_0,\mathcal{D}^{s^\star}_1)$ that optimize the success probability $P^2_{\mathcal{S}^\star}$ of evaluating the function $f_2({\bf x}_1,{\bf x}_2,{\bf y})$.

{\it Case study:} Consider that both Alice-1 and Alice-2 transmit the second bit of their respective input strings, i.e., $c_1 = \mathcal{E}^4_1$ and $c_2 = \mathcal{E}^4_2$. The promise condition $x^0_1 \oplus x^0_2 \oplus y^0 = 0$ imposes restrictions on inputs and outputs:
\begin{subequations}
\begin{align}
y^0 = 0~:~\left\{\!\begin{aligned} &~~~~~~~~~~~~~~x^0_1x^0_2 = \{00,11\}, \\
&f_2(0x^1_1, 0x^1_2, 0)= x^1_1 \oplus x^1_2,\\ 
&f_2(1, x^1_1, 1x^1_2, 0)= x^1_1 \oplus x^1_2 \oplus 1\end{aligned}\right\},\\
y^0 = 1~:~\left\{\!\begin{aligned} &~~~~~~~~~~~~~~x^0_1x^0_2 = \{01,10\}, \\
&f_2(0x^1_1, 1x^1_2, 1)= x^1_1 \oplus x^1_2 \oplus 1,\\ 
&f_2(1x^1_1, 0x^1_2, 1)= x^1_1 \oplus x^1_2 \oplus 1\end{aligned}\right\}.       \end{align} 
\end{subequations}
For $y^0 = 1$, Bob can perfectly evaluate the function using communication from the Alices and the knowledge of promise condition by selecting the decoding strategy $\mathcal{D}^{13}_1$. However, for $y^0 = 0$, with the functions $x^1_1 \oplus x^1_2$ and $x^1_1 \oplus x^1_2 \oplus 1$ being balanced and complementary, Bob can only evaluate the function with a success probability at most $1/2$. In this case, Bob may choose different decoding, such as $\mathcal{D}^0_0, \mathcal{D}^1_0, \mathcal{D}^{12}_0$, or $\mathcal{D}^{13}_0$. Therefore, the strategy $\mathcal{S}(4,4,0/1/12/13,13)$ yields a maximum average success probability $P_{\mathcal{S}} = 3/4$.

A similar analysis can be applied for any encoding pair $(\mathcal{E}^p_1, \mathcal{E}^q_2) \in \mathcal{G}^{{\bf x}_1}_{E} \times \mathcal{G}^{{\bf x}_2}_{E}$, and the optimal success probability can be efficiently determined through computational methods. As shown in Table \ref{tab1}, the success probability is upper bounded by $3/4$ in each case.

\twocolumngrid
~\vspace{0cm}\\
{\bf Acknowledgment}: KA acknowledges support from the CSIR project {$09/0575(19300)/2024$-EMR-I. SGN acknowledges support from the CSIR project $09/0575(15951)/2022$-EMR-I. MB acknowledges funding from the National Mission in Interdisciplinary Cyber-Physical systems from the Department of Science and Technology through the I-HUB Quantum Technology Foundation (Grant no: I-HUB/PDF/2021-22/008).


%

\end{document}